\documentclass{llncs}

 \setlength{\textheight}{195 mm}
 \usepackage{cite}


\usepackage{paralist}


\usepackage{csquotes}
\usepackage{microtype}
\usepackage{amsmath,amssymb,stmaryrd,upgreek,pifont}
\usepackage{graphicx}


\title{Contextuality in multipartite pseudo-telepathy graph games}
\titlerunning{Contextuality in multipartite pseudo-telepathy graph games} 

\author{Anurag Anshu\inst{1} \and 
Peter H{\o}yer\inst{2} \and 
Mehdi Mhalla\inst{3} \and  
Simon Perdrix\inst{4}}
\institute{ Centre for Quantum Technologies, National University of Singapore 
\and   University of Calgary, Canada 
\and Univ. Grenoble Alpes, CNRS, Grenoble INP, LIG, F-38000 Grenoble France 
\and CNRS, LORIA, Universit\'e de Lorraine, Nancy, France 
}






\newcommand{\ket}[1]{\left| #1 \right\rangle}
\newcommand{\bra}[1]{\left\langle #1 \right|}

\newcommand{\even}{\textup{Even}}
\newcommand{\odd}{\textup{Odd}}
\newcommand{\supp}{\textup{supp}}

\newcommand{\loc}{\textup{loc}}
\begin{document}

\maketitle
\begin{abstract}
Analyzing pseudo-telepathy graph games, we propose a way to build contextuality scenarios exhibiting the quantum supremacy using graph states. We  consider the combinatorial structures that generate equivalent scenarios. We introduce a new tool  called multipartiteness width to investigate which scenarios are harder to decompose and show  that there exist graphs generating scenarios with a linear multipartiteness width.
 \end{abstract}

\section{Introduction}

Contextuality is an active area of research that 
describes models of correlations and interpretations, and links to some fundamental questions about the natural world. It also provides a framework where one can utilize the understanding of quantum mechanics (and quantum information) in order to better analyze, understand, and interpret  macroscopic phenomena \cite{COLT,Context,livre,Bob,cwc}. 

The theoretical and experimental study of quantum world has proven that a scenario involving many parties (each having access to a local information) can contain correlations that do not possess any classical interpretation that relies on decomposition of these correlations using local functions. Contextuality can be viewed as a tool to describe the combinatorial structures present in these correlations.  

  A model of contextuality scenario is due to Abramsky and Brandenburger \cite{AB} and uses sheaf theory to naturally translate the consistency of interpretation by the pre-sheaf structure obtained by a distribution functor on the sheaf of events. 
  The authors introduce three levels of contextuality: ($i$) Probabilistic contextuality, which corresponds to the possibility of simulating locally and classically a probability distribution. It extends the celebrated Bell's theorem \cite{Bell} which shows that quantum probabilities are inconsistent with the predictions of any local realistic theory; ($ii$) Logical contextuality or possibilistic  contextuality, which extends Hardy's construction \cite{Hardy} and considers only the support of a probability distribution; ($iii$) Strong contextuality, which extends the properties of the GHZ state \cite {GHZ} and relies on the existence of a global assignment consistent with the support.  

  More recently Ac\'in, Fritz, Leverrier, and Bel\'en Sainz \cite{AFLS} have presented contextuality  scenarios defined as hypergraphs, in which vertices are called outcomes  and    hyperedges  are called measurements.  A general interpretation model is an assignment of non negative reals to the vertices that can be interpreted as a  probability distribution for any hyperedge  (weights of the vertices of each hyperedge sum to  1). Each hypergraph $H$ admits a set ${\cal C}(H)$ (resp. ${\cal Q} (H)$, ${\cal G} (H)$) of  classical (resp. quantum, general probabilistic) models  with ${\cal C}(H)\subseteq {\cal Q}(H)\subseteq{\cal G}(H)$.
  
  They have shown that the Foulis Randall product of hypergraphs \cite{FR}  allows one to describe the set of no-signaling models in product scenarios ${\cal G} (H_1\otimes H_2)$. They have also
  investigated  the multipartite case, 
  showing that the different products for composition produce models that are  observationally equivalent. 

   A particular case of contextuality scenarios is the \textit{pseudo-telepathy games} \cite{BB}, which are games that can be won by non-communicating players that share  quantum resources, but cannot be won classically without communication. 
A family of pseudo-telepathy games based on graph states have been introduced in \cite{AM}. 
The pseudo-telepathy game associated with a graph $G$ of order $n$ (on $n$ vertices), is a collaborative $n$-player game where each player receives a binary input (question) and is asked to provide, without communication, a binary output (answer). Some global pairs of (answers|questions) are forbidden  and correspond to losing positions.  
Given such a scenario, to quantify its multipartiteness,  we define the  {\bf multipartiteness width}:  a model on $n$ parties  has a multipartiteness width less than $k$ if it has an interpretation (assignment of real positive numbers to the vertices) that can  be obtained  using as ressources interpretations of  contextual scenarios  on less than $k$ parties. 



 It has been shown in \cite{BM} that  even though  GHZ type scenarios are maximally non local (strongly contextual), they can be won with 2 partite nonlocal boxes. So the multipartiteness width is different from the usual measures of contextuality \cite{qc,shane}. However, it has potential application for producing device independent witnesses for entanglement depth  \cite{LRB14}.

 


In section 2, we define the graph pseudo-telepathy games, investigate in detail the quantum strategy and link them to contextuality scenarios. We show in section 3  that provided that the players share multipartite randomness, it is enough to surely win the associated pseudo-telepathy game, in order to simulate the associated quantum probability distribution. In section 4,
we prove that graphs obtained by a combinatorial graph transformation called pivoting correspond to equivalent games. Finally,   we prove that there exist graphs for which the multipartiteness width is linear in the  number of players, improving upon the previous logarithmic bound given in \cite{AM}.

Note that even though the rules of these graph games appear non-trivial, they naturally correspond to the correlations present in outcomes of a quantum process that performs $X$ and $Z$ measurements on a graph state. Thus, they might be easy to produce empirically. Furthermore even if the space of events is quite large, the scenarios have the advantage of possessing concise descriptions, quite similar to the separating scenarios using Johnson graphs in \cite{AL}. Requiring such large structures to achieve possibilistic contextuality for quantum scenarios seems to be unavoidable. Indeed, it has been shown that multiparty XOR type inequalities involving two-body correlation functions cannot achieve pseudo-telepathy \cite{GRRH}.

\section{Pseudo-telepathy graph games, multipartiteness and contextuality scenarios}


\noindent{\bf Graph notations.} We consider finite simple undirected graphs. Let $G=(V,E)$ be a graph. For any vertex $u\in V$, $N_G(u) = \{v\in V~|~(u,v)\in E\}$ is the neighborhood of $u$. For any $D\subseteq V$, the odd neighborhood of $D$ is the set of all vertices which are oddly connected to $D$ in $G$: $\odd(D)=\{ v\in V :   |D\cap N(v) |=1 \bmod 2\}$. $\even(D)= V\setminus \odd(D)$ is the even neighborhood  of $D$, and $\loc(D)= D\cup \odd(D)$ is the local set of $D$ which consists of the vertices in $D$ and those oddly connected to $D$.  
For any $D\subseteq V$,  $G[D] = (D, E\cap D{\times} D)$ is the subgraph induced by $D$, and $|G[D]|$ its size, i.e. the number of edges of $G[D]$. 
Note that $\odd$ can be realized as linear map (where we consider subsets as binary vectors), which implies that for any two subset of vertices $A,B$, $\odd(A\oplus B)=\odd(A)\oplus\odd(B)$ where $\oplus$ denotes the symmetric difference.

We introduce the notion of \emph{involvement}: 
\begin{definition}[Involvement]
Given a graph $G=(V,E)$, a  set $D\subseteq V$ of vertices is \emph{involved} in a binary labelling $x\in \{0,1\}^V$ of the vertices if $ D\subseteq \supp(x) \subseteq \even(D)$, where $\supp(x)=\{u\in V, x_u=1\}$.
\end{definition}
 In other words,  $D$ is involved in the binary labelling $x$, if all the vertices in $D$ are labelled with $1$ and all the vertices in $\odd(D)$ are labelled with  $0$. 
Notice that when $G[D]$ is not a union of Eulerian graphs\footnote{ The following three properties are equivalent: (i) $D\subseteq \even(D)$; (ii) every vertex of $G[D]$ has an even degree; (iii) $G[D]$ is a union of Eulerian graphs. Notice that $D\subseteq \even(D)$ does not imply that $G[D]$ is Eulerian as it may not be connected.}, there is no binary labelling in which $D$ is involved. On the other hand, if $G[D]$ is a union of Eulerian graphs, there are $2^{|\even(D)|-|D|}$ binary labellings in which $D$ is involved. 

~

\noindent {\bf Collaborative games.} A multipartite collaborative game $\cal G$ for a set $V$ of players is a scenario characterised by a set $\mathcal L\subseteq \{0,1\}^V\times \{0,1\}^V$ of losing pairs: each player $u$ is asked a binary question $x_u$ and has to produce a binary answer $a_u$. The collaborative game is won by the players if for a given question $x\in \{0,1\}^V$ they produce an answer $a\in \{0,1\}^V$ such that the pair formed by $a$ and $x$, denoted $(a|x)$, is not a losing pair, i.e. $(a|x)\notin \cal L$.

A game is  pseudo-telepathic if classical players using classical resources cannot perfectly win the game (unless they cheat by exchanging messages after receiving the questions) whereas using entangled states as quantum resources the players can perfectly win the game, giving the impression to a quantum non believer that they are telepathic (as the only classical explanation to a perfect winning strategy is that they are communicating). 

{\bf {Example 1:}}\label{exa:mermin}
The losing set associated with the Mermin parity game \cite{Mermin} is  $ {\cal L}_{\textbf{Mermin}}=\{ (a|x) : \sum x_i=0 \bmod 2$ and $\sum a_i +  (\sum x_i)/2=1 \bmod 2 \}$. Notice that the losing set admits the following simpler description: ${\cal L}_{\textbf{Mermin}} = \{(a|x) : 2|a| = |x| + 2 \bmod 4\}$,  where $|x|=|\supp(x)|$ is the Hamming weight of $x$.


\noindent {\bf Collaborative graph games MCG(G)}: A multipartite collaborative game \textbf{MCG}$ (G)$ associated with a graph $G=(V,E)$, where $V$ is a set of players, is the collaborative game where the set of losing pairs is $ {\cal L}_G:=\{ (a|x) : \exists D$ involved in $x$ s.t. $\sum_{u\in \loc(D)} a_u = |G[D]|+1 \bmod 2 \}$. In other words, 
the collaborative game is won by the players if for a given question $x\in \{0,1\}^V$ they produce an answer $a\in \{0,1\}^V$ such that  for any non-empty $D$ involved in $x$, $\sum_{u\in \loc(D)} a_u = |G[D]|\bmod 2$. 


{\bf{Example 2:}} \label{exa:Kn}Consider \textbf{MCG}$ (K_n)$ the collaborative game associated with  the complete graph $K_n$ of order $n$. When a question $x$ contains an even number of $1$s the players trivially win since there is no non-empty subset of vertices involved in such a question. When $x$ has an odd number of $1$s, the set of players (vertices) involved in this question is $D =\supp(x)$. In this case, all the players are  either in $D$ or $\odd(D)$ thus the sum of all the answers has to be equal to $|G[D]|=\frac{|D|(|D|-1)}{2} = \frac{|D|-1}{2} \bmod 2$. 
Thus for the complete graph $K_n$,  $ {\cal L}_{K_n}=\{ (a|x) : |x|=1\bmod 2$ and $|a|= \frac{|x|-1}{2}+1\bmod 2\}=  \{ (a|x) :  2|a|= |x| +1 \bmod 4 \}$. 
Note that for this particular graph, the constraints are global in the sense that the sum of the answers of all the players is used for all the questions.
Notice also that the  set of losing pairs ${\cal L}_{K_n}=  \{ (a|x) :  2|a|= |x| +1 \bmod 4 \}$ is similar to the one of the Mermin parity game, ${\cal L}_{\textbf{Mermin}} = \{(a|x) : 2|a| = |x| + 2 \bmod 4\}$.  
In section \ref{secsimul}, we actually show the two games simulate each other.  


\noindent{\bf Quantum strategy (Qstrat):} In the following we show that for any graph $G$, the corresponding multipartite collaborative game can be won by the players if they share a particular quantum state. More precisely the state they share is the so-called graph state $\ket G= \frac{1}{\sqrt{2^{|V|}}}\sum_{y\in \{0,1\}^V} (-1)^{|G[\supp(y)]|}\ket y$, and they apply the following strategy:  every player $u$ measures his qubit according to $X$ if $x_u=1$ or according to $Z$ if $x_u=0$. 
Every player answers the outcome $a_u\in \{0,1\}$ of this measurement. 

This quantum strategy \textbf{QStrat}, not only produces correct answers, but provides all the good answers uniformly:

\begin{lemma}Given a graph $G=(V,E)$ and question $x \in \{0,1\}^V$, the probability $p(a|x)$ to observe the outcome $a\in \{0,1\}^V$ when each qubit $u$ of a graph state $\ket G$ is measured according to $Z$ if $x_u=0$ or according to $X$ if $x_u=1$ satisfies: 
$$p(a|x) = \begin{cases} 0 & \text{if $(a|x) \in \mathcal L$}\\
\frac{|\{\text{$D$ \textup{involved in} $x$}\}|} {2^{|V|}}&\text{otherwise.}
\end{cases}$$
\end{lemma}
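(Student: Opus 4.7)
My plan is to compute $p(a|x)=\langle G|\Pi_a|G\rangle$ via the graph-state stabilizer formalism. I begin by writing the measurement projector as $\Pi_a=\bigotimes_u\tfrac12\bigl(I+(-1)^{a_u}M_u\bigr)$, where $M_u=Z_u$ if $x_u=0$ and $M_u=X_u$ if $x_u=1$, and then expand to obtain
\[
p(a|x)=\frac{1}{2^{|V|}}\sum_{R\subseteq V}(-1)^{\sum_{u\in R}a_u}\,\langle G|M_R|G\rangle,\qquad M_R:=\prod_{u\in R}M_u.
\]
Since $\langle G|P|G\rangle\in\{0,\pm 1\}$ for any Pauli $P$, and is nonzero exactly when $\pm P$ lies in the stabilizer group of $|G\rangle$, only a sparse set of $R$ can contribute; I will identify these with the sets involved in $x$.

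The stabilizer group is $\{\pm K_D:D\subseteq V\}$, where $K_D=\prod_{u\in D}K_u$ with $K_u=X_u\prod_{v\in N(u)}Z_v$. Expanding $K_D$ and moving every $X$ past every $Z$ produces an anti-commutation sign for each ordered pair $u<u'$ in $D$ with $u\sim u'$, giving
\[
K_D=(-1)^{|G[D]|}\prod_{u\in D}X_u\prod_{v\in\odd(D)}Z_v,
\]
with $X_wZ_w$ factors remaining whenever $w\in D\cap\odd(D)$. For $\pm K_D$ to equal some $M_R$ these mixed factors must be absent, i.e.\ $D\cap\odd(D)=\emptyset$; matching the $X$- and $Z$-supports of $M_R$ to those prescribed by $x$ then forces $D\subseteq\supp(x)$ and $\odd(D)\subseteq V\setminus\supp(x)$, which is exactly the condition that $D$ is involved in $x$, with $R=\loc(D)$. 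Collecting signs yields
\[
p(a|x)=\frac{1}{2^{|V|}}\sum_{D\text{ involved in }x}(-1)^{f(D)},\qquad f(D):=|G[D]|+\sum_{u\in\loc(D)}a_u\pmod 2.
\]

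Finally I show that $f$ is an $\mathbb F_2$-linear form on the space $\mathcal I$ of sets involved in $x$. The set $\mathcal I$ is closed under symmetric difference because both involvement conditions are linear in $D$ (using $\odd(A\oplus B)=\odd(A)\oplus\odd(B)$), and on $\mathcal I$ one has $\loc(D_1\oplus D_2)=\loc(D_1)\oplus\loc(D_2)$. Abelianness of the stabilizer group together with $K_u^2=I$ gives $K_{D_1}K_{D_2}=K_{D_1\oplus D_2}$, and reading off signs through the displayed formula for $K_D$ forces $|G[D_1\oplus D_2]|\equiv|G[D_1]|+|G[D_2]|\pmod 2$ on $\mathcal I$, so $f$ is indeed linear. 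If $f\equiv 0$ on $\mathcal I$, the sum contributes $|\mathcal I|=|\{D\text{ involved in }x\}|$ and gives the stated value; and $f\equiv 0$ is precisely the statement that every involved $D$ satisfies $\sum_{u\in\loc(D)}a_u\equiv|G[D]|\pmod 2$, i.e.\ $(a|x)\notin\mathcal L$. Otherwise $f$ is a nonzero linear form, so it vanishes on a hyperplane of $\mathcal I$ and equals $1$ on the other coset, the sum is zero, and $p(a|x)=0$. The main technical delicacy is the anti-commutation accounting that produces the phase $(-1)^{|G[D]|}$ in the formula for $K_D$; once that sign is pinned down, the remaining steps are standard $\mathbb F_2$ linear algebra.
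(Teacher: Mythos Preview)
Your proof is correct and follows essentially the same route as the paper's: both expand the measurement projector, reduce the expectation to the sum $\tfrac1{2^{|V|}}\sum_{D\text{ involved in }x}(-1)^{|G[D]|+\sum_{u\in\loc(D)}a_u}$, and finish with the $\mathbb F_2$-linearity/affine-coset argument. The only cosmetic difference is that the paper reaches the sum via the state identity $X_D\ket G=(-1)^{|G[D]|}Z_{\odd(D)}\ket G$ together with $\bra G Z_C\ket G=[C{=}\emptyset]$, whereas you phrase the same computation through the stabilizer formalism by identifying which $M_R$ equal $\pm K_D$; your derivation of the linearity of $f$ (via $K_{D_1}K_{D_2}=K_{D_1\oplus D_2}$) is in fact slightly more explicit than the paper's one-line assertion that $R_0^{(x,a)}$ is a vector space.
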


\begin{proof}
According to the Born rule, the probability to get the answer $a\in \{0,1\}^V$ to a given question $x\in \{0,1\}^V$ is:
\begin{align*}
p(a|x)&=\bra G \left(\bigotimes_{v\in V\setminus \supp(x)} \frac {I+(-1)^{a_v}Z_v}2\right)\otimes \left(\bigotimes_{u\in \supp(x)} \frac {I+(-1)^{a_u}X_u}2\right)   \ket G\\
&= \frac 1{2^n} \sum_{D\subseteq V} (-1)^{\sum_{u\in D}a_u} \bra GZ_{D \setminus \supp(x)} X_{D\cap \supp(x)}\ket G
\end{align*}
The basic property which makes this strategy work is that  for any $u\in V$, $X_u \ket G = Z_{N(u)}\ket G$. 
As a consequence, since $X$ and $Z$ anti-commute and $X^2=Z^2=I$, for any $D\subseteq V$, $X_D\ket G = (-1)^{|G[D]|}Z_{\odd(D)}\ket G$. Thus, 
\begin{align*}
p(a|x)
&= \frac 1{2^n} \sum_{D\subseteq V} (-1)^{|G[D\cap \supp(x)]| + \sum_{u\in D}a_u} \bra GZ_{(\odd(D\cap \supp(x)))\oplus(D\cap \setminus \supp(x))} \ket G
\end{align*}
Where $\oplus$ denotes the symmetric difference. Since $\bra G Z_C\ket G =\begin{cases} 1& \text{if $C=\emptyset$}\\0& \text{otherwise}\end{cases}$, 
\begin{align*}
p(a|x)&= \frac 1{2^n} \sum_{D\subseteq V, D\setminus  \supp(x) = \odd(D\cap \supp(x))} (-1)^{|G[D\cap \supp(x)]| +\sum_{u\in D}a_u}\\
&=\frac 1{2^n} \sum_{D_1\subseteq \supp(x)}\sum_{D_0 \subseteq V\setminus \supp(x), D_0= \odd(D_1)} (-1)^{|G[D_1]| +\sum_{u\in D_0\cup D_1}a_u}\\
&= \frac 1{2^n} \sum_{D_1\subseteq \supp(x), \odd(D_1)\cap \supp(x)=\emptyset} (-1)^{|G[D_1]| +\sum_{u\in \loc(D_1)}a_u}\\
&= \frac 1{2^n} \sum_{D_1 \text{involved in } x} (-1)^{|G[D_1]| +\sum_{u\in \loc(D_1)}a_u}= \frac {|R^{(x,a)}_0| - |R^{(x,a)}_1|}{2^n} 
\end{align*}
where $R^{(x,a)}_d = \{\text{$D$ involved in $x$}:|G[D]| +\sum_{u\in \loc(D)}a_u = d \bmod 2\}$.  If $(a|x)\notin \mathcal L$, then $R^{(x,a)}_1=\emptyset$, so $p(a|x)=\frac{|\{\textup{$D$ involved in $x$}\}|} {2^n}>0$ since $\emptyset$ is involved in $x$. Otherwise, there exists $D'\in R^{(x,a)}_1$. Notice that $R^{(x,a)}_0$ is a vector space ($\forall D_1, D_2\in R^{(x,a)}_0, D_1\oplus D_2 \in R^{(x,a)}_0$) and $R^{(x,a)}_1$ an affine space $R^{(x,a)}_1 = \{D'\oplus D~|~D\in R^{(x,a)}_0\}$. Thus $|R^{(x,a)}_0| = |R^{(x,a)}_1|$ 
which implies $p(a|x)=0$. 
\qed \end{proof}

The probability distribution produced by \textbf{QStrat}  depends on the number of sets $D$ involved in a given question $x$. Notice that a set $D\subseteq \supp(x)$ is  involved in $x$ if and only if $D\in \textup{Ker}(L_x)$, where $L_x$ linearly\footnote{$L_x$ is linear for the symmetric difference: $L_x(D_1\oplus D_2)=L_x(D_1)\oplus L_x(D_2)$} maps $A\subseteq \supp(x)$ to $\odd(A)\cap \supp(x)$. Thus $|\{\text{$D$ \textup{involved in} $x$}\}|=2^{|x| - rk_G(x)}$, where $rk_G(x) = \log_2(|\{L_x(A) : A\subseteq \supp(x))\}|)$ is the rank of $L_x= A\mapsto \odd(A)\cap \supp(x)$.

~

\noindent {\bf Contextuality scenario.}  Following  the hypergraph model of  \cite{AFLS}, we associate with every graph $G$ a contextuality scenario, where each vertex is a pair $(a|x)$ and each hyperedge corresponds, roughly speaking,  to a constraint. There are  two kinds of hyperedges, those ($H_{\textup{Nsig}_V}$) which guarantee no-signaling and those ($H_G$), depending on the graph $G$, which avoid the losing pairs.  

\begin{itemize}
\item  $H_{\textup{Nsig}_V}$ is the hypergraph representing the no-signaling polytope. It corresponds \cite{AFLS} to the Bell scenario   $B_{V,2,2}$ where $|V|$ parties have access to $2$ local measurements each, each of which has $2$ possible outcomes (see Figure \ref{fig:Hnsig2}), which is obtained as a product\footnote{The Foulis Randall product of scenarios \cite{AFLS} is the scenario $H_A \otimes H_B$ with vertices $V(H_A\otimes H_B)=V(H_A)\times V(H_B)$ and edges $E(H_A\otimes H_B)=E_{A\rightarrow B}\cup E_{A\leftarrow B}$ where  $E_{A\rightarrow B}:=\{\cup_{a\in e_A} \{a\}\times f(a) : e_a\in E_A, f: e_A \rightarrow E_B\}$ and $E_{A\leftarrow B}:=\{\cup_{b\in e_A} f(b)\times \{b\} : e_b\in E_b, f: E_B \rightarrow E_A\}$. In the multipartite case there are several ways to define products, however they all correspond to the same non-locality constraints \cite{AFLS}. Therefore one can just consider the minimal product $ ^{\min}\otimes_{i=1 }^n H_i$ which has vertices in the cartesian product $V=\Pi V_i$ and edges $\cup_{k\in[1,n]} E_k$ where $E_k=\{(v_1\ldots ,v_n), v_i\in e_i \, \forall i\neq k,\, v_k\in f(\overrightarrow{v})\}$ for some edge $e_i\in E(H_i)$ for every party $i\neq k$ and a function $\overrightarrow{v}\mapsto   f(\overrightarrow{v})$ which assigns to every joint outcome $\overrightarrow{v}=(v_1\ldots v_{k-1},v_{k+1},\ldots v_n)$ an edge $f(\overrightarrow{v})\in E(H_k)$ (the $k^{th}$  vertex is replaced by a function of the others).} of the elementary scenario $B_{1,2,2}$.


\begin{figure}
\centering
\begin{minipage}{.5\textwidth}
  \centering
  \includegraphics[width=.6\linewidth]{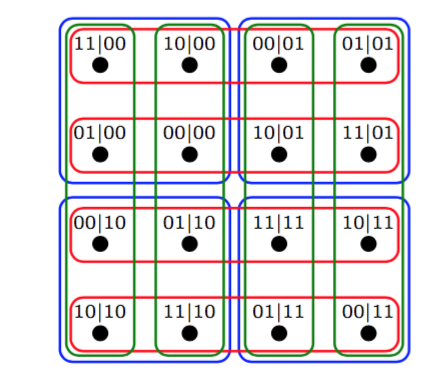}
  \caption{$H_{\textup{Nsig}_2}$: hyperedges of the Bell scenario $B_{2,2,2}$ from \cite{AL}}
  \label{fig:Hnsig2}
\end{minipage}%
\begin{minipage}{.5\textwidth}
  \centering
  \includegraphics[width=.6\linewidth]{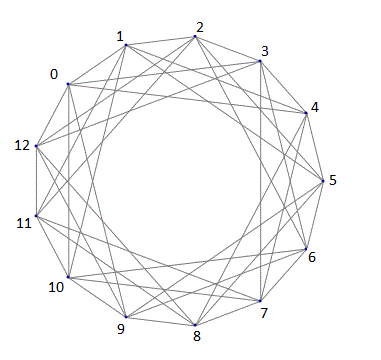}
  \caption{Paley Graph of order 13}
  \label{fig:paley}
\end{minipage}
\end{figure}

\item 
The hypergraph $H_G$ defined on the same vertex set, corresponds to the game constraints:    for each 
question\footnote{Note that for the questions $x$ for which there exists no $D$ involved in $x$, all the answers are allowed thus the constraints represented by the associated edge is a hyperedge of no-signaling scenario $H_{Nsig}$.} $x\in \{0,1\}^V$ we associate an hyperedge $e_x$ containing all the answers which make the players win on  $x$ i.e., $e_x=\{(a|x)\in \{0,1\}^V\times \{0,1\}^V, (a|x)\notin \cal L\}$.

\end{itemize}


Given  a graph $G=(V,E)$, \textbf{MCG}$ (G)$ is a \emph{pseudo-telepathy} game if it admits a quantum model (${\cal Q} (H_G\cup H_{\textup{Nsig}_V})\neq \emptyset$) but no classical model (${\cal C}(H_G\cup H_{\textup{Nsig}_V})=\emptyset$). It has been proven in  \cite{AM} that \textbf{MCG}$(G)$ is pseudo-telepathic if and only if $G$ is not bipartite.

{\bf { Example 3:}}
In a complete graph $K_n$ of order $n$, there exists a non-empty set $D$ involved in a question $x\in \{0,1\}^V$ if and only if $|x|=1\bmod 2$.  With each such question $x$, the associated hyperedge is $e_x=\{(a|x)\in \{0,1\}^V\times \{0,1\}^V s.t.~2|a|\neq|x|+1\bmod 4\}$. 

\sloppy
 {\bf { Example 4:}}
 In the graph  Paley 13 (see Figure \ref{fig:paley}),  $\odd(\{0,1,4\})=\{2,7,8,9,11,12\}$ thus  if $\{0,1,4\}$ is involved in $x$  {\em i.e.}~$x_i=1$ for $i\in \{0,1,4\}$ and $x_i=0$ for $i\in \{2,7,8,9,11,12\}$ then the associated pseudo-telepathy game requires that the sum of the outputs of these nine players $\sum_{i\not \in \{3,5,6,10\}} a_i$ has to be odd. This corresponds to 8 hyperedges $e_{jkl}$  for $j,k,l \in \{0,1\}$ in the contextuality scenario  where $e_{jkl}=\{(a|x),\, \sum_{i\not \in \{5,6,10\}} a_i=1 \bmod 2,\, x_i=1$ for $ i\in \{0,1,4\}$, $x_i=0$ for $i\in \{2,7,8,9,11,12\},\, x_5=j, \, x_6=k, \, x_{10}=l \}$.



The probabilistic contextuality is what was considered in \cite{AM}  as it corresponds to investigating the possibility of simulating a probability distribution of a quantum strategy playing with graph states.
The two other levels of contextuality gain some new perspectives when viewed as games: indeed the possibilistic contextuality coincides with 
the fact that the players cannot give all the good answers with non zero probability using classical local strategies, and  strong contextuality  just  means that classical players cannot win the game (even by giving a strict subset of the good answers).

\begin{definition}
An interpretation $p:\{0,1\}^V\times \{0,1\}^V \to [0,1]$ is $k$-multipartite if it can be obtained  by a strategy without communication using  nonlocal boxes that are at most $k$-partite: for any set $I\subset V$  with $|I|\le k$, each player has access to one bit of a variable $\lambda_I(a_I|x_I)$ that has a no-signaling probability distribution.

\end {definition}


In other words, a $k$-multipartite interpretation can be obtained  
with no-signaling correlations involving
 at most $k$ players. For example the strategy to win the Mermin game proposed in \cite{BM} where each pair among $n$ players share a (2-partite) non localbox and  each player outputs the sum of his boxes' ouputs is a 2-multipartite interpretation. Similarly, the result in \cite{BP} where they prove that a probability distribution that can be obtained by 5 players measuring a quantum state cannot be simulated without communication using any number of bi-partite non local boxes shows that it is not a  2-multipartite interpretation.\footnote{The probability distribution described in \cite{BP} corresponds to the quantum winning strategy on the graph state obtained from a cycle with 5 vertices.} 

\begin{definition}[multipartiteness width]
A scenario has a multipartiteness width $k$ if it admits a $k$-multipartite interpretation but no $(k-1)$-multipartite interpretation. 
\end{definition}


In a contextual scenario, the more hyperedges one adds the less possible interpretations exist.  A scenario has a multipartiteness width $k$  if its hyperedges already forbids  all the interpretations of a product of  Bell scenarios on less than $k$ parties.
 For a scenario, having a classical interpretation  means being decomposable : one can think of the probability distribution as local actors acting each on his bit  and that's a classical interpretation. The multipartiteness width measure how non-decomposable a scenario is : it can not be decomposed with interpretations where each subspace has a small width.

It implies that the players cannot perfectly win the game if they  have only quantum systems on less than $k$ qubits, this corresponds to using $k$ seprable states as ressources as defined in \cite{GTB05}.






Note that  from the observations in  \cite{AM} the multipartiteness width of the scenario generated by the Paley graph on 13 (see figure \ref{fig:paley}) is strictly larger than 4.




 In the next section, we will show how for the scenarios we describe,  being able to give only good answers allows for simulation of the quantum distribution with random variables.
Thus, the contextuality lies in the combinatorial structure of the graph and the three levels collapse for these games.

\section{Simulating a probability distribution is the same as winning the pseudo-telepathy graph game}
In \cite{AM} it was proven that for some graphs, the probability distributions of the quantum strategy using the graph states cannot be simulated using non local boxes on less than $k$ parties, we show here that any strategy that allows to win the game can be extended using random variables shared between neighbors (in the graph) to simulate this uniform probability distribution coming from the quantum strategy.
%
%
%

We start by describing a classical strategy  \textbf{CStrat}  based on shared random variables rather than quantum states. We show that 
\textbf{CStrat}   is a winning strategy if and only if the graph is bi-partite. We also show that \textbf{CStrat} 
 can be used to make any winning strategy a uniform winning strategy, i.e. each valid answer to a given question are equiprobable. 
We show that \textbf{CStrat}   can be locally adapted to collaborative games on graphs that can be obtained by a sequence of local complementations.

\noindent {\bf Classical strategy (Cstrat): } 
\label{def:}
Given a graph $G=(V,E)$, pick uniformly at random $\lambda\in \{0,1\}^V$. 
Each player $u\in V$ receives a pair of bits $(\lambda_u,\mu_u)$, where $\mu_u=\sum_{v\in N_G(u)} \lambda_u\bmod 2$.  Given a question $x\in \{0,1\}^V$, each player $u\in V$ locally computes and answers $a_u =(1-x_u).\lambda_u + x_u.\mu_u\bmod 2$. 
\sloppy

\begin{lemma}
Given a graph $G=(V,E)$ and a question $x\in \{0,1\}^V$, 
\textbf{CStrat}  produces an answer uniformly at random in $\{a\in \{0,1\}^V~|~ \exists D\subseteq S, (A\oplus \odd(A \oplus D)){\cap} S = \emptyset \text{ where $A = \supp(a)$ and $S=\supp(x)$}\}$. 
\end{lemma}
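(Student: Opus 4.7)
The plan is to express the random answer $a$ explicitly in terms of $\lambda$, describe the fiber of the resulting map over a fixed $a$, and match the associated achievability condition with the one stated in the lemma.

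First I would unpack \textbf{CStrat}. Let $\Lambda=\supp(\lambda)$, $A=\supp(a)$, and $S=\supp(x)$. By construction, $a_u=\lambda_u$ for $u\notin S$ and $a_u=\mu_u=|N_G(u)\cap\Lambda|\bmod 2$ for $u\in S$, so
\[A\cap (V\setminus S)=\Lambda\cap (V\setminus S)\qquad\text{and}\qquad A\cap S=\odd(\Lambda)\cap S.\]
Writing $\Lambda_0=\Lambda\cap (V\setminus S)$ and $\Lambda_1=\Lambda\cap S$ and using the linearity of $\odd$ recalled in the graph notations, the second equation becomes $\odd(\Lambda_1)\cap S=(A\cap S)\oplus(\odd(\Lambda_0)\cap S)$. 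The first equation forces $\Lambda_0=A\cap (V\setminus S)$, so $\lambda$ produces $a$ if and only if $\Lambda_1\subseteq S$ solves a fixed linear equation in $\{0,1\}^S$.

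Second, I would match this with the condition in the lemma. Using linearity of $\odd$ and the identity $(X\oplus Y)\cap S=(X\cap S)\oplus(Y\cap S)$, the condition $(A\oplus\odd(A\oplus D))\cap S=\emptyset$ rewrites as $\odd(D)\cap S=(A\cap S)\oplus(\odd(A)\cap S)$. Splitting $A=(A\cap S)\oplus(A\cap (V\setminus S))$ and expanding $\odd(A)$ by linearity, the right-hand side matches the one obtained in the first step after the substitution $\Lambda_1=D\oplus(A\cap S)$, which is a bijection on subsets of $S$. Hence the set of achievable answers coincides with the set described in the lemma.

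Finally, uniformity is a short linear-algebra observation: the map $L_x:\Lambda_1\mapsto\odd(\Lambda_1)\cap S$ on $\{0,1\}^S$ is linear, so whenever a solution exists its fiber has size $|\ker L_x|$, independent of $a$. Since $\Lambda_0$ is also uniquely determined by $a$, the total number of $\lambda\in\{0,1\}^V$ producing a given achievable $a$ is exactly $|\ker L_x|$, and \textbf{CStrat} outputs each achievable answer with the same probability $|\ker L_x|/2^{|V|}$. The delicate point is the bookkeeping in the second step: one must carefully commute $\odd$ with $\oplus$ and with intersection by $S$, and verify that the change of variable $D\leftrightarrow \Lambda_1\oplus(A\cap S)$ preserves the constraint of being a subset of $S$.
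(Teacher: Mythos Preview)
Your proposal is correct and follows essentially the same route as the paper: both decompose $\Lambda$ into its parts inside and outside $S$, identify $\Lambda\setminus S$ with $A\setminus S$, reduce the remaining condition to an affine equation in $\Lambda\cap S$ for the linear map $L_x:D\mapsto\odd(D)\cap S$, and then match the achievability condition to the one in the statement via the change of variable $D\leftrightarrow D\oplus(A\cap S)$. The only cosmetic difference is that the paper phrases uniformity as an explicit probability $2^{|x|-\rank_G(x)-n}$ whereas you write it as $|\ker L_x|/2^{|V|}$; these are the same number.
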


\begin{proof} Given a graph $G=(V,E)$, a question $x\in \{0,1\}^V$ and  $a\in \{0,1\}^V$, the probability that \textbf{CStrat}  outputs $a$ is 
\begin{eqnarray*}
p(a|x) &=&p\left(\forall u\in V{\setminus} S,  a_u = \lambda_u\right) p(\forall u\in S, a_u = {\sum_{v\in N(u)}\lambda_v} \bmod 2~|~\forall u \in V{\setminus} S,  a_u = \lambda_u)\\
&=&p\left(A\setminus S = \Lambda \setminus S\right) p(A\cap S= \odd(\Lambda) \cap S |   A\setminus S = \Lambda \setminus S)
\end{eqnarray*}
where $S=\supp(x)$, $A = \supp(a)$ and $\Lambda = \supp(\lambda)$.
Since $p\left(A\setminus S = \Lambda\setminus S\right) = \frac{1}{2^{n-|x|}}$, 
\begin{eqnarray*}
p(a|x) 
&=&\frac{1}{2^{n-|x|}}p\left(A\cap S= \odd(\Lambda\cap S \oplus \Lambda \setminus S) \cap S |   A\setminus S = \Lambda\setminus S\right)\\
&=&\frac{1}{2^{n-|x|}}p\left(A\cap S= \odd(D \oplus (A \setminus S)) \cap S |   A\setminus S = \Lambda\setminus S\right)
\end{eqnarray*}
where $D = \Lambda\cap S$. 
If $ A\cap S\neq  \odd(D \oplus (A \setminus S)) \cap S$  for all $D\subseteq S$, then $p(a|x)=0$. Otherwise, the set of subsets $D$ of $S$ which satisfy the condition is the affine space $\{D_0\oplus D |\, D\subseteq S \wedge  \odd(D)\cap S = \emptyset \}$, where $D_0$ is a fixed set which satisfies $A\cap S= \odd(D_0 \oplus A \setminus S) \cap S$. Thus the $p(a|x)=\frac{1}{2^{n-|x|}}.\frac{|\{D\subseteq S | \odd(D)\cap S = \emptyset\}|}{2^{|x|}} = 2^{|x|-rk_G(x)-n}$, which is independent of $a$, proving the uniformity of the answer. 

Finally notice   $\exists D_0\subseteq S, A\cap S= \odd(D_0 \oplus (A \setminus S)) \cap S$ if and only if 
$\exists D_1\subseteq S, (A\oplus \odd(A \oplus D_1)){\cap} S = \emptyset$, by taking $D_1 = D_0\oplus (A\cap S)$. 
\qed \end{proof}

We consider some standard graph transformations : 
Given a graph $G=(V,E)$ the local complementation on a vertex $u\in V$ produces the graph $G*u=(V,E\oplus K_{N(v)})$ where the sum is taken modulo 2  (it is the symmetric difference) and $K_U$ is the complete graph on $U\subset V$. $G*u$ is obtained from $G$ by  exchanging the edges by non edges and vice versa in the neighborhood of the vertex $u$. Pivoting using an edge $(u,v)$, is a sequence of three local complementations  $G\wedge uv=G*u*v*u$. 
We denote by $\delta_{loc}(G)$ ($\delta_{piv}(G)$) the minimum degree taken over all graphs that can be obtained from $G$
through some sequence of local complementations (edge pivots).

Given the shared randomness $(\lambda_v,\mu_v)_{v\in V}$ associated with $G$, if player $u$ replaces its first bit by the XOR of its two bits, and each of his neighbors replaces his second bit by the XOR of his two bits, one gets the shared randomness associated with $G*u$. 
\begin{lemma}
\label{lem:delta}
Given the probability distribution $(\lambda_v,\mu_v)_{v\in V}$ associated with $G$, if player $u$ replaces its first bit by the XOR of its two bits, and each of its neighbors replaces their second bit by the XOR of their two bits, one gets the probability distribution associated with $G*u$.
\end{lemma}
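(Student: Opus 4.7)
The plan is to verify directly that the transformed pairs $(\lambda'_v,\mu'_v)_{v\in V}$ satisfy the two defining properties of the shared randomness for $G*u$: namely, (a) $\lambda'$ is uniformly distributed on $\{0,1\}^V$, and (b) $\mu'_v = \sum_{w\in N_{G*u}(v)}\lambda'_w \bmod 2$ for every $v\in V$. Spelling out the transformation, we have $\lambda'_u = \lambda_u\oplus \mu_u$ and $\lambda'_v=\lambda_v$ for $v\ne u$, while $\mu'_v=\lambda_v\oplus\mu_v$ for $v\in N(u)$ and $\mu'_v=\mu_v$ otherwise.

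For (a), since $u\notin N(u)$, the value $\mu_u=\sum_{w\in N(u)}\lambda_w$ does not depend on $\lambda_u$, so $\lambda\mapsto \lambda'$ is an involution on $\{0,1\}^V$ (only the $u$-th coordinate is XORed with a function of the other coordinates), hence $\lambda'$ is still uniform.

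For (b), I would split into three cases depending on $v$. When $v\notin N(u)\cup\{u\}$, nothing is flipped around $v$: $N_{G*u}(v)=N_G(v)$ does not contain $u$, so $\lambda'_w=\lambda_w$ for all $w\in N_{G*u}(v)$, giving the sum $\mu_v=\mu'_v$. When $v=u$, local complementation at $u$ leaves the neighborhood of $u$ unchanged, $N_{G*u}(u)=N_G(u)$, and again $u\notin N(u)$ means $\lambda'_w=\lambda_w$ on this set, so the sum equals $\mu_u=\mu'_u$.

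The main step is the case $v\in N(u)$, where both the relabelling and the edge set change. Here $N_{G*u}(v)=N_G(v)\oplus(N_G(u)\setminus\{v\})$. Working modulo $2$, I would compute
\begin{align*}
\sum_{w\in N_{G*u}(v)}\lambda'_w
&= \sum_{w\in N_G(v)}\lambda'_w + \sum_{w\in N_G(u)\setminus\{v\}}\lambda'_w.
\end{align*}
In the first sum, only $w=u$ sees the change, contributing an extra $\mu_u$ on top of $\sum_{w\in N_G(v)}\lambda_w=\mu_v$. In the second sum, $u\notin N(u)$ so $\lambda'_w=\lambda_w$ throughout, yielding $\mu_u\oplus\lambda_v$. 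Adding the two gives $\mu_v\oplus\mu_u\oplus\mu_u\oplus\lambda_v = \mu_v\oplus\lambda_v=\mu'_v$, as required. The only subtle point, and the one I expect to need the most care, is keeping track of which sums pick up the flipped bit at $u$ versus the flipped neighborhood at $v$; these two effects cancel precisely because $u\in N_G(v)$ exactly when $v\in N_G(u)$ and $u\notin N_G(u)$.
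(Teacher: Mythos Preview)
Your proof is correct and follows essentially the same approach as the paper: write down $\lambda'$ and $\mu'$, use the description $N_{G*u}(v)=N_G(v)\oplus N_G(u)\oplus\{v\}$ for $v\in N_G(u)$ (equivalently $N_G(v)\oplus(N_G(u)\setminus\{v\})$), and verify $\mu'_v=\sum_{w\in N_{G*u}(v)}\lambda'_w$ by a case split on whether $v\in N_G(u)$. You are slightly more explicit than the paper in separating out the case $v=u$ and in checking that $\lambda'$ is still uniform (the paper omits this), but the substance of the argument is the same.
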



\begin{proof}
\sloppy
For any $v\in V$, let $\lambda'_v=\begin{cases}\lambda_u+\mu_u\bmod 2&\text{if $u=v$}\\\lambda_v&\text{otherwise}\end{cases}$ and

$\mu'_v=\begin{cases}\mu_v+\lambda_v\bmod 2&\text{if $v\in N_G(u)$}\\\mu_v&\text{otherwise}\end{cases}$. 

Since $N_{G*u}(v) =\begin{cases}N_G(v)\oplus N_G(u)\oplus \{v\}&\text{if $v\in N_G(u)$}\\N_G(v)&\text{otherwise}\end{cases}$,   one gets that for any $v\notin N_G(u)$, $\mu'_v=\mu_v=\sum_{w\in N_G(v)}\lambda_w \bmod 2 = \sum_{w\in N_{G*u}(v)}\lambda'_w\bmod 2$ and  for any $v\in N_G(u)$, $\mu'_v = \mu_v +\lambda _v=\lambda_v+\sum_{w\in N_G(v)}\lambda_w \bmod 2=\lambda_v+ \mu_u + \sum_{w\in N_G(v)}\lambda'_w \bmod 2=\lambda_v+\sum_{w\in N_G(u)}\lambda'_w+\sum_{w\in N_G(v)}\lambda'_w \bmod 2=\sum_{w\in N_{G*u}(v)}\lambda'_w \bmod 2$. 
\qed \end{proof}

Thus the probability distribution   corresponding to the classical strategy for $G$ can be locally transformed into the probability distribution associated with the $G*u$, thus one can use local complementation to optimise the cost of preparing the shared randomness. For instance the classical strategy \textbf{CStrat}  for a graph $G$ requires  shared random bits on at most $\Delta_{loc}(G)+1$ players, where $\Delta_{loc}(G)=\min(\Delta(G'), \text{ s.t. } \exists u_1,\ldots, u_k, G'=G*u_1*\ldots *u_k)$ and $\Delta(G)$ is its maximum degree. If there is no pre-shared random bits, the probability distribution can be prepared using at most $2|G|_{loc}$ communications in-between the players, where $|(G)|_{loc} = \min(|G'|, \text{ s.t. } \exists u_1,\ldots, u_k, G'=G*u_1*\ldots *u_k)$ is the minimum number of edges by local complementation.


Now we show how, using the classical strategy \textbf{CStrat} , one can simulate the quantum strategy \textbf{QStrat}  given an oracle that provides only good answers.

\begin{lemma}
\label{lem:simwin} 
For any collaborative game on a graph  $G$, for any  strategy  $Q$ that never loses, there exists a strategy $Q'$ using the outputs of $Q$ and shared random variables that simulate \textbf{QStrat} .
\end{lemma}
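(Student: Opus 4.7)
The plan is to define $Q'$ as follows: each player $u$ runs $Q$ to obtain a bit $a_u$, independently runs \textbf{CStrat} on fresh shared randomness to obtain a bit $c_u$, and outputs $a_u\oplus c_u$. I will show this produces a uniformly random winning answer for every question $x$, which matches the distribution of \textbf{QStrat} by the first lemma.

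First, I would note the linear structure of the winning set. For a fixed question $x$, a vector $a\in\{0,1\}^V$ is winning if and only if
\[
\sum_{u\in\loc(D)} a_u \ =\ |G[D]|\bmod 2
\qquad\text{for every $D$ involved in $x$,}
\]
which is an affine system over $\mathbb{F}_2$. Since \textbf{QStrat} succeeds, this system is consistent, so its solution set is a coset of the kernel
\[
K_x \ =\ \bigl\{c\in\{0,1\}^V:\ \textstyle\sum_{u\in\loc(D)} c_u = 0\bmod 2 \text{ for every } D \text{ involved in }x\bigr\}.
\]

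Next I would show that \textbf{CStrat} samples uniformly from $K_x$. Using the identity $\sum_{u\in D}\mu_u = \sum_{v\in\odd(D)}\lambda_v\bmod 2$ (which follows from $|D\cap N(v)|\bmod 2=1$ iff $v\in\odd(D)$, and which underlies the \textbf{CStrat} lemma), for any $D$ involved in $x$ the output $c$ of \textbf{CStrat} satisfies
\[
\sum_{u\in\loc(D)} c_u \ =\ \sum_{u\in D}\mu_u \ +\ \sum_{u\in\odd(D)}\lambda_u \ =\ \sum_{v\in\odd(D)}\lambda_v\ +\ \sum_{u\in\odd(D)}\lambda_u \ =\ 0\bmod 2,
\]
because $D\subseteq \supp(x)$ forces $c_u=\mu_u$ on $D$ and $\odd(D)\cap\supp(x)=\emptyset$ forces $c_u=\lambda_u$ on $\odd(D)$. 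Hence the support of \textbf{CStrat} is contained in $K_x$. Uniformity on the whole of $K_x$ then follows by counting: the \textbf{CStrat} lemma gives each output probability $2^{|x|-rk_G(x)-n}$, so the support has size $2^{n-|x|+rk_G(x)}$; and the same quantity counts the winning set via the probability formula of \textbf{QStrat}, and therefore equals $|K_x|$ as the winning set is a coset of $K_x$. The inclusion is thus an equality.

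Combining the two steps, $c$ is uniform on $K_x$, so $a\oplus c$ is uniform on the coset $a+K_x$; since $Q$ always outputs some winning $a$, this coset is exactly the winning set for $x$, matching \textbf{QStrat}. The main obstacle is closing the uniformity step for \textbf{CStrat}: a priori one only controls its support (contained in $K_x$) and the equal pointwise weight on it, so the argument hinges on the cardinality identity extracted from the two earlier lemmas. Everything else is a direct $\mathbb{F}_2$-linear verification.
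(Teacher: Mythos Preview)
Your proposal is correct and follows essentially the same approach as the paper: define $Q'$ as the XOR of $Q$ with \textbf{CStrat}, verify via the identity $\sum_{u\in D}\mu_u=\sum_{v\in\odd(D)}\lambda_v\bmod 2$ that the XOR preserves winning, and close with a cardinality/probability count using the earlier lemmas. Your coset-of-$K_x$ framing makes the linear structure more explicit than the paper's contradiction argument, but the underlying computations and the uniformity-by-counting step are the same.
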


\begin{proof}

Given a collaborative graph game on a graph $G$,
let  $Q$ be a strategy that always outputs permissible outputs  for any set of inputs $x$, so we have pairs $(a|x)\not \in \mathcal L$.
We consider the strategy which combines $Q$ and \textbf{CStrat}  for this graph: For a given question $x$, $Q'$ outputs the XOR of the $Q$ answer and \textbf{CStrat}  answer for $x$. First we prove that such an answer is a valid answer and then the uniform probability among the possible answer to a given question.
Given a question $x\in \{0,1\}^V$, suppose $Q'$ outputs $a'\in \{0,1\}^V$:
$\forall u\in V$, $a_u' = a_u+(1-x_u)\lambda_u +x_u \mu_u$ where $a_u$ is the answer produced by $Q$ and $\lambda$ and $\mu$ are as defined in the classical strategy.


%
%
%
%
%
%
%

By contradiction, assume $(a'|x)\in \mathcal L$, so there exists $D$ involved in $x$ such that $\sum_{u\in \loc(D)}a'_u =|G[D]|+1 \bmod 2 $

\begin{eqnarray*}
\sum_{u\in \loc(D)} a'_u &=& \sum_{u\in \loc(D)}  \left(a_u + (1-x_u)\lambda_u+x_u\mu_u\right)\bmod 2 \\
&=&\sum_{u\in \loc(D)}a_u+\sum_{u\in \loc(D) \setminus \supp(x)} \lambda_u+\sum_{u\in \loc(D) \cap \supp(x)}\mu_u \bmod 2 \\
&=&\sum_{u\in \loc(D)}a_u+\sum_{u\in \odd(D)} \lambda_u+\sum_{u\in D}\sum_{v\in N(u)} \lambda_v  \bmod 2 \\
&=&\sum_{u\in \loc(D)}a_u+\sum_{u\in \odd(D)} \lambda_u+\sum_{v\in \odd(D)} \lambda_v  \bmod 2 \\
&=&\sum_{u\in \loc(D)}a_u \bmod 2
\end{eqnarray*}

Thus $(a|x)\in \mathcal L$ which is a contradiction thus $p(a'|x)=0$ if $(a'|x) \in \mathcal L$

Now we prove that $p(a'|x) = 2^{|x|-n-rk_G(x)}$.
First assume $Q$ is determinist, thus $p(a'|x)$ is the probability that the classical strategy outputs $a+ a':=(a_u+a'_u\bmod 2)_{u\in V}$. Since  this probability is non zero it must be $2^{|x|-n-rk_G(x)}$.  If $Q$ is probabilistic, $p(a'|x) = \sum_{a\in \{0,1\}^V} p(\text{$Q$ outputs $a$ on $x$})p(\text{classical strategy outputs $a+a'$ on $x$})\le  2^{|x|-n-rk_G(x)}\sum_{a\in \{0,1\}^V} p(\text{$Q$ outputs $a$ on $x$})\le 2^{|x|-n-rk_G(x)}$. 
Thus each answer $a$ produced by the strategy on a given question $x$ is s.t. $(a|x)\notin \mathcal L$ and occurs with probability at most $2^{|x|-n-rk_G(x)}$. Since $|\{a\in \{0,1\}^V~|~(a|x)\notin \mathcal L\}| = 2^{|x|-n-rk_G(x)}$, each of the possible answers is produced by the strategy and occurs with probability $2^{|x|-n-rk_G(x)}$. 
\qed \end{proof}
%
%
%
%

\section{Locally equivalent games}
\label{secsimul}

A pseudo telepathy game $\cal G$ locally simulates another pseudo telepathy game $\cal G'$ if any winning strategy for $\cal G$ can be locally turned  into a winning strategy for $\cal G'$:

\begin{definition}[Local Simulation]
Given two pseudo telepathy games $\cal G$ and $\cal G'$ on a set $V$ of players which sets of losing pairs are respectively $\cal L_G$ and $\cal L_{G'}$, 
$\cal G$ locally simulates $\cal G'$ if for all $u\in V$, 
there exist $f_1,\ldots, f_n: \{0,1\} \to \{0,1\}$ and $g_1,\ldots,g_n: \{0,1\} \times \{0,1\} \to \{0,1\}$ s.t. $\forall x,a \in \{0,1\}^V$
$(g(a,x),x)\in {\cal L_{G'}} \Rightarrow (a | f(x))\in \cal L_G$  
where $f(x) = (f_u(x_u))_{u\in V}$ and $g(a,x) = (g_u(a_u,x_u))_{u\in V}$.
%
%
\end{definition}
 
Assuming $\cal G$ locally simulates $\cal G'$ and that the players have a strategy to win $\cal G$, the strategy for $\cal G'$ is as follows: given an input $x$ of $\cal G'$, each player $u$ applies the preprocessing $f_u$ turning her input $x_u$ into $f_u(x_u)$, then they collectively play the game $\cal G$ with this input $f(x)$ getting an output $a$ s.t. $(a|f(x))\notin \cal L_{G}$. Finally each player $u$ applies a postprocessing $g_u$ which depends on her output $a_u$ and her initial input $x_u$ to produce the output $g_u(a_u,x_u)$ to the game $\cal G'$. This output is valid since, by contradiction,  $(g(a,x),x)\in {{\cal L}_{\cal G'}}$ would imply $(a|f(x))\in \cal L_{G}$.

%
%
\begin{definition}[Local Equivalence]
$\cal G$ and $\cal G'$ are locally equivalent games if $\cal G$ locally simulates $\cal G'$ and $\cal G'$ locally simulates $\cal G$. 
\end{definition}

In the following we give two examples of locally equivalent games : first we show that the games associated with the complete graphs are locally equivalent to Mermin parity games, and then that pivoting, a graph theoretical transformation, produces a graph game locally equivalent to the original one:

\begin{lemma}
For any $n$, the game associated with the complete graph $K_n$ is locally equivalent to the Mermin parity game on $n$ players.
\end{lemma}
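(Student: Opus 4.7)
The plan is to give explicit single-player pre- and post-processings in both directions, relying on the compact forms computed in Examples 1 and 2. Indeed, $\mathcal{L}_{K_n} = \{(a|x) : 2|a| = |x|+1 \bmod 4\}$ and $\mathcal{L}_{\textbf{Mermin}} = \{(a|x) : 2|a| = |x|+2 \bmod 4\}$ differ only by a shift of one in the $|x|$-term. Flipping a single coordinate of $x$ shifts $|x|$ by $\pm 1$, which exactly bridges the two constraints; moreover it swaps the parity of $|x|$, so the vacuous side of one game is mapped to the vacuous side of the other and those cases need not be treated separately.

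To show that $\textbf{MCG}(K_n)$ locally simulates the Mermin parity game, I would fix one distinguished player (say player $1$) to do all the nontrivial work and set $f_1(x_1) = 1\oplus x_1$, $g_1(a_1,x_1) = a_1\oplus x_1$, and $f_u = \mathrm{id}$, $g_u(a_u,x_u) = a_u$ for $u\neq 1$. Writing $y=f(x)$ and $b=g(a,x)$ and using the integer identity $c\oplus d = c+d-2cd$ for $c,d\in\{0,1\}$, one obtains $|y| = |x|+1-2x_1$ and, since $4a_1x_1\equiv 0\bmod 4$, $2|b| = 2|a|+2x_1 \bmod 4$. Substituting, the condition $2|a|\neq |y|+1\bmod 4$ becomes $2|b|\neq |x|+2\bmod 4$, i.e.\ $(a|f(x))\notin \mathcal{L}_{K_n}$ implies $(g(a,x)|x)\notin \mathcal{L}_{\textbf{Mermin}}$, as required. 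The reverse direction is entirely analogous, using the same $f$ but the postprocessing $g_1(a_1,x_1) = a_1\oplus 1\oplus x_1$; the same type of mod-$4$ computation gives $2|b| = 2|a|+2-2x_1\bmod 4$, and the Mermin hypothesis $2|a|\neq |y|+2\bmod 4$ translates to the desired $K_n$ conclusion $2|b|\neq |x|+1\bmod 4$.

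Since each direction reduces to a short arithmetic verification modulo $4$, I do not anticipate any real obstacle. The only point worth checking is locality: every $f_u$ depends only on $x_u$ and every $g_u$ only on $(a_u,x_u)$, which is immediate for our choice since only player $1$ ever does anything nontrivial.
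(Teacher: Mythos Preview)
Your proposal is correct and essentially identical to the paper's own proof: the paper also fixes a single player $u$, uses the same preprocessing $f_u(x_u)=1-x_u$, and the same postprocessings (written as $a_u+x_u-2a_ux_u$ and $1-a_u-x_u+2a_ux_u$, which are your $a_1\oplus x_1$ and $a_1\oplus 1\oplus x_1$), and verifies the implication via the same mod-$4$ bookkeeping on $|x|$ and $2|a|$. Your contrapositive formulation and the extra remark about the vacuous parity cases are cosmetic differences only.
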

\begin{proof}
The set of losing pairs of the two games are  ${\cal L}_{\textsf{Mermin}} =\{(a|x), 2|a| = |x|+2\bmod 4\}$  (see example 1 
) and ${\cal L}_{K_n} =\{(a|x), 2|a| = |x|+1\bmod 4\}$ (see example 2 
). \\{[$K_n$ simulates Mermin]} Let $u\in V$ be a fixed player. We define for all $v\in V$, $f_v(x_v) = \begin{cases}1-x_v & \text{if $v=u$}\\ x_v&\text{otherwise}\end{cases}$ and $g_v(a_v,x_v) = \begin{cases}a_v+x_v-2a_vx_v & \text{if $v=u$}\\ a_v&\text{otherwise}\end{cases}$. If $(g(a,x),x)\in {\cal L}_{\textsf{Mermin}}$ then $2|g(a,x)| = |x|+2\bmod 4$ which implies $2(|a|-a_{u}+a_u+x_{u}-2a_ux_u)= 2|a| + 2x_u = |x|+2\bmod 4$, so $2|a| = |x|+2-2x_u = |f(x)|+1\bmod 4$, thus $(a|f(x))\in {\cal L}_{K_n}$. \\
{[Mermin simulates $K_n$ ]} Let $u\in V$ be a fixed player. $f$ and $g$ are defined like in the previous case  except $g_u(a_u,x_u) = 1-a_u-x_u+2a_ux_u$.  If $(g(a,x),x)\in {\cal L}_{K_n}$ then $2|g(a,x)| = |x|+1\bmod 4$ which implies $2(|a|-a_u+1-a_u-x_u+2a_ux_u) = 2|a| +2 -2x_u= |x|+1\bmod 4$, so $2|a| = |x|+1-2x_u +2 = |f(x)|+2\bmod 4$, thus $(a|f(x))\in {\cal L}_{\textsf{Mermin}}$.   
\qed \end{proof}

\begin{lemma}Given a graph $G=(V,E)$ and $(u,v)\in E$, the games associated with $G$ and $G\wedge uv$ are locally equivalent. 
\end{lemma}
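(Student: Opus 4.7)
My plan is to construct explicit local preprocessing and postprocessing functions $(f_w,g_w)_{w\in V}$ realizing a local simulation of $G\wedge uv$ by $G$; the reverse direction then follows by applying the same recipe to the pivoted graph and using the involutivity $(G\wedge uv)\wedge uv = G$. The design of $(f_w,g_w)$ is dictated by the local Clifford implementation of pivoting on graph states. Partition $V\setminus\{u,v\}$ into $A = N_G(u)\cap N_G(v)$, $B = N_G(u)\setminus(N_G(v)\cup\{v\})$, $C = N_G(v)\setminus(N_G(u)\cup\{u\})$, and $R = V\setminus(\{u,v\}\cup A\cup B\cup C)$. A direct check of $G*u*v*u$ shows that pivoting leaves edges inside each of $A,B,C,R$ unchanged, leaves the edge $(u,v)$ and all $u$-$A$, $v$-$A$, and $R$-incident edges unchanged, removes the edges $u$-$B$ and $v$-$C$ while adding $u$-$C$ and $v$-$B$, and complements every edge in the bipartite parts $A$-$B$, $A$-$C$, and $B$-$C$. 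At the graph-state level this pivot is implemented by $U = U_u\otimes U_v\otimes\bigotimes_{w\in A\cup B\cup C}U_w$ with $U_u = U_v = \sqrt{-iX}\sqrt{iZ}\sqrt{-iX}$, $U_w = iZ$ on $A\cup B\cup C$, and identity on $R$, sending $\ket{G}$ to $\ket{G\wedge uv}$ up to a global phase. Reading off its action on Pauli generators gives $U_u^\dagger X U_u = -Z$ and $U_u^\dagger Z U_u = -X$ at $u$ (and symmetrically at $v$), and $U_w^\dagger X U_w = -X$, $U_w^\dagger Z U_w = Z$ for $w\in A\cup B\cup C$. Accordingly I take $f_u(x_u) = 1\oplus x_u$ and $g_u(a_u,x_u) = 1\oplus a_u$ (and likewise at $v$), $f_w(x_w) = x_w$ and $g_w(a_w,x_w) = a_w\oplus x_w$ for $w\in A\cup B\cup C$, and $f_w, g_w$ the identity on $R$.

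\textbf{Core identity.} With the transforms fixed, the statement reduces to the purely combinatorial implication $(g(a,x)|x)\in\mathcal L_{G\wedge uv}\Rightarrow (a|f(x))\in\mathcal L_G$. Contrapositively, assuming that every nonempty $D$ involved in $f(x)$ in $G$ satisfies $\sum_{w\in\loc_G(D)}a_w\equiv |G[D]|\bmod 2$, I need to show that every nonempty $S$ involved in $x$ in $G\wedge uv$ satisfies $\sum_{w\in\loc_{G\wedge uv}(S)}g_w(a_w,x_w)\equiv |(G\wedge uv)[S]|\bmod 2$. To each such $S$ I would associate a set $D = S\oplus T_S$ with $T_S\subseteq\{u,v\}$ chosen according to whether $u,v\in S$ and to $x_u,x_v$, and verify that (i) $D$ is nonempty and involved in $f(x)$ in $G$, using the edge-level pivot description above together with the linearity $\odd(X\oplus Y)=\odd(X)\oplus\odd(Y)$ recalled in the paper, and (ii) the parity identity
\[\sum_{w\in\loc_G(D)}a_w + |G[D]| \equiv \sum_{w\in\loc_{G\wedge uv}(S)}g_w(a_w,x_w) + |(G\wedge uv)[S]|\bmod 2\]
holds. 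The outcome flips built into $g_u,g_v$ and the $x_w$-dependent flips in $g_w$ for $w\in A\cup B\cup C$ are calibrated to absorb, modulo $2$, both the symmetric difference $\loc_{G\wedge uv}(S)\oplus\loc_G(D)$ in the $a$-sums and the difference $|G[D]|+|(G\wedge uv)[S]|$ in induced edge counts.

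\textbf{Main obstacle.} The principal difficulty is the parity bookkeeping of step (ii): one must show that $|G[D]|+|(G\wedge uv)[S]|\bmod 2$, which counts edges in the toggled bipartite subgraphs induced on $S\cap A$, $S\cap B$, $S\cap C$ together with corrections from the $u$-$B$/$v$-$C$ versus $u$-$C$/$v$-$B$ swap, matches exactly the total contribution $\sum_{w\in(A\cup B\cup C)\cap\loc_{G\wedge uv}(S)}x_w + \epsilon_u\,[u\in\loc_{G\wedge uv}(S)] + \epsilon_v\,[v\in\loc_{G\wedge uv}(S)]$ coming from the $g$-flips, for constants $\epsilon_u,\epsilon_v\in\{0,1\}$ determined by the choice of $T_S$. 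A case split on $(x_u,x_v)\in\{0,1\}^2$ and on the parities of $|S\cap A|, |S\cap B|, |S\cap C|$ reduces the verification to a short list of elementary counting identities that I expect to close case by case. Once step (ii) is established, the implication follows and $G$ locally simulates $G\wedge uv$; the converse simulation follows at once by applying the same construction with $G$ replaced by $G\wedge uv$.
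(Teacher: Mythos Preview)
Your strategy---derive $(f_w,g_w)$ from a local Clifford realising the pivot, then check the implication on losing sets---is the same as the paper's in spirit, but you make it considerably harder than necessary in two ways.

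First, the Clifford. The paper invokes the identity $\ket{G\wedge uv}=H_{u,v}Z_{N(u)\cap N(v)}\ket G$ from \cite{MP}. This yields the simpler pair
\[
f_w(x_w)=\begin{cases}1{-}x_w&w\in\{u,v\}\\x_w&\text{else}\end{cases},
\qquad
g_w(a_w,x_w)=\begin{cases}a_w\oplus x_w&w\in N(u)\cap N(v)\\a_w&\text{else,}\end{cases}
\]
so corrections touch only $A=N(u)\cap N(v)$, not all of $A\cup B\cup C$, and there is no output flip at $u,v$. Your Clifford differs from $H_{u,v}Z_A$ by (a phase times) the stabiliser $X_uZ_{N(u)}\cdot X_vZ_{N(v)}$ of $\ket G$; that is why your $(f,g)$ is also valid, but carrying the extra $Z_{B\cup C}$ and the $Y$-twist at $u,v$ just inflates the subsequent bookkeeping.

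Second, and more important: the combinatorial verification you flag as the ``main obstacle'' is entirely avoidable, and this is the step the paper exploits. By the earlier lemma, the quantum strategy on $\ket G$ has support exactly on the non-losing pairs: $p_G(a|x)>0\iff (a|x)\notin\mathcal L_G$, and likewise for $G\wedge uv$. A local Clifford $U$ with $U\ket G\propto\ket{G\wedge uv}$ that maps $Z/X$ measurements to $Z/X$ measurements (with the sign and swap pattern encoded by $f,g$) gives $p_G(a\,|\,f(x))=p_{G\wedge uv}(g(a,x)\,|\,x)$ for \emph{every} pair $(a,x)$. Equating the zero sets yields immediately $(g(a,x)|x)\in\mathcal L_{G\wedge uv}\Leftrightarrow (a|f(x))\in\mathcal L_G$, with no case analysis on $(x_u,x_v)$ or on the parities of $|S\cap A|,|S\cap B|,|S\cap C|$. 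Your proposal, as written, leaves that case analysis open (``I expect to close case by case''), so it is incomplete; but the gap closes in one line once you use the quantum-probability characterisation of $\mathcal L_G$ rather than attacking the parity identity directly. Involutivity of pivoting then gives the converse, as you note.
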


\begin{proof} Since pivoting is its self inverse ($(G\wedge uv)\wedge uv =G$) it is enough to prove that $G$ locally simulates $G\wedge uv$. The proof is based on the existence of a quantum strategy for any graph $G$ which consists in sharing the quantum state $\ket G$, and for each player $w$, in measuring her qubit according to $X$ if $x_w=1$ or according to $Z$ if $x_w=0$, and then output the outcome $a_w\in \{0,1\}$ of this measurement. A quantum strategy for $G$ can be turned into a quantum strategy for $G\wedge uv$ due to the following property of graph states: $\ket{G\wedge uv}= H_{u,v}Z_{N(u)\cap N(v)} \ket G$ \cite{MP}. The unitary map $H$ exchanges $X$- and $Z$- measurements -- i.e. for any state $\ket \phi$, apply $H$ on $\ket \phi$ followed by a $Z$-measurement (resp. $X$) produces the same classical outcome as measuring $\ket \phi$ according to $X$ (resp. $Z$) --  while the unitary $Z$ leaves invariant the classical outcomes of a $Z$-measurement and exchanges the two possible outcomes of a $X$-measurement. As a consequence, $(g(a,x),x)\in {{\cal L}_{G\wedge uv}}\Rightarrow (a|f(x))\in {\cal L}_{G}$, where $f$ and $g$ are defined as follows 
 $f_w(x_w)=\begin{cases}1{-}x_w&\text{if $w{\in} \{u,v\}$}\\ x_w&\text{otherwise}\end{cases}$ and $g_w (a_w,x_w) = \begin{cases}a_w{+} x_w{-}2a_wx_w&\text{if $w{\in} N(u){\cap}N(v)$}\\a_w&\text{otherwise}\end{cases}$
\qed \end{proof}


Therefore, the important quantity for the pre-shared randomness for  the strategies defined with a graph  is
$\Delta_{piv}(G)=\min \{\Delta(G'), G' \text{ pivot equivalent to $G$}\}$.

\section{Scenarios  with  linear multipartiteness width}

We prove that there exist contextuality scenarios with linear multipartiteness width. We use a graph property called $k$-odd domination which is related \cite{AM} to the classical simulation of the quantum probability distribution obtained by playing the associated graph game.
Since bipartite graphs correspond to graph games that can be won classically \cite{AM}, we focus on the non-bipartite case by showing that there exist non-bipartite 0.11$n$-odd dominated graphs of order $n$.

\begin{definition}[$k$-odd domination \cite{AM}]
A graph $G = (V,E)$ is $k$-odd dominated (k-o.d.) iff for any $S\in {V \choose k}$, there exists a labelling of the vertices in $S = \{v_1, \ldots, v_k\}$ and $C_1, \ldots C_k$, s.t. $\forall i$,  
$C_i\subseteq V\setminus S$ and 
$\odd(C_i)\cap \{v_i, \ldots v_k\} = \{v_i\}$ and 
$C_i\subseteq \even(C_i)$.
\end{definition}
\begin{lemma}\label{lem1}
For any $k\ge 0$, $r\ge 0$ and any graph $G=(V,E)$ a graph of order $n$ having two distinct independent sets $V_0$ and $V_1$ of order $|V_0|=|V_1|=\lfloor \frac{n-r}2\rfloor$, 
$G$ is $k$-odd dominated if for any $i\in \{0,1\}$, and any  non-empty $D\subseteq V\setminus V_i$,
 $|Odd_G(D)\cap  V_{i}|>{k-|D|}$ 
\end{lemma}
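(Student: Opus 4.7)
The plan is to construct, for each $v \in S$, a set $C_v \subseteq V \setminus S$ that lies entirely inside either $V_0 \setminus S$ or $V_1 \setminus S$. This choice immediately gives $C_v \subseteq \even(C_v)$ for free: since $V_0$ and $V_1$ are independent, the induced subgraph on $C_v$ has no edges, so every vertex has even (zero) degree. The remaining task is then purely a linear-algebra question over $\mathbb{F}_2$: produce such $C_v$'s so that the labelling condition $\odd(C_{v_i}) \cap \{v_i, \ldots, v_k\} = \{v_i\}$ can be satisfied for some ordering of $S$.

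For each $i \in \{0,1\}$, define the $\mathbb{F}_2$-linear map $\phi_i : \mathbb{F}_2^{V_i \setminus S} \to \mathbb{F}_2^{S \setminus V_i}$ by $\phi_i(D) = \odd(D) \cap S$; the image lies in $\mathbb{F}_2^{S \setminus V_i}$ because for $D \subseteq V_i$ one has $\odd(D) \cap V_i = \emptyset$. I will show $\phi_i$ is surjective. By duality it suffices to prove the transpose $T \mapsto \odd(T) \cap (V_i \setminus S)$ is injective on $\mathbb{F}_2^{S \setminus V_i}$. Suppose a nonempty $T \subseteq S \setminus V_i$ has $\odd(T) \cap (V_i \setminus S) = \emptyset$. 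Then $\odd(T) \cap V_i \subseteq V_i \cap S$, and using $|T| \leq |S \setminus V_i| = k - |V_i \cap S|$ we obtain $|\odd(T) \cap V_i| \leq |V_i \cap S| \leq k - |T|$. But $T \subseteq V \setminus V_i$ is nonempty, directly contradicting the hypothesis $|\odd_G(T) \cap V_i| > k - |T|$.

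Surjectivity of $\phi_0$ and $\phi_1$ lets me pick, for each $v \in S \setminus V_0$, a witness $C_v \subseteq V_0 \setminus S$ with $\odd(C_v) \cap S = \{v\}$, and for each $v \in S \setminus V_1$, a witness $C_v \subseteq V_1 \setminus S$ with $\odd(C_v) \cap S = \{v\}$. In the natural reading where $V_0 \cap V_1 = \emptyset$, every $v \in S$ falls in at least one of these cases, producing a single $C_v$ with $\odd(C_v) \cap S = \{v\}$ exactly. Then for \emph{any} labelling $v_1, \ldots, v_k$ of $S$ one trivially has $\odd(C_{v_i}) \cap \{v_i, \ldots, v_k\} = \{v_i\}$, and together with the Eulerian condition from the first paragraph this certifies that $G$ is $k$-odd dominated.

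The main obstacle is the edge case where $S \cap V_0 \cap V_1 \neq \emptyset$: for $v \in S \cap V_0 \cap V_1$, neither surjectivity argument directly yields a valid $C_v$, since $v$ is excluded from both codomains $\mathbb{F}_2^{S \setminus V_0}$ and $\mathbb{F}_2^{S \setminus V_1}$. My plan implicitly assumes the natural disjoint setting $V_0 \cap V_1 = \emptyset$. Should the overlapping case actually be needed, one can exploit the fact that any $v \in V_0 \cap V_1$ has $N(v) \subseteq V \setminus (V_0 \cup V_1)$, so taking $C_v$ to be a disjoint union of a $V_0$-part and a $V_1$-part still keeps every vertex of $G[C_v]$ at even degree; filling in this bookkeeping is what I expect to be the only nontrivial technicality beyond the surjectivity argument above.
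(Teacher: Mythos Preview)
Your proof is correct and follows essentially the same approach as the paper: both arguments use the hypothesis to show that the $\mathbb{F}_2$-linear map $T\mapsto \odd(T)\cap (V_i\setminus S)$ on subsets $T\subseteq S\setminus V_i$ is injective, deduce that its transpose is surjective, and then pick each $C_v$ inside an independent set $V_i\setminus S$ so that $\odd(C_v)\cap S=\{v\}$ and $C_v\subseteq\even(C_v)$ automatically. The paper, like you, tacitly works under the disjointness assumption $V_0\cap V_1=\emptyset$ (it partitions $S$ as $S_0\cup S_1\cup S_2$ with $S_i=S\cap V_i$), so your flagged ``edge case'' is not actually treated there either and is irrelevant for the intended application.
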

\begin{proof}
Given $S_0 \subseteq V_0$, $S_1 \subseteq V_1$, and $S_2\subseteq V_2=V\setminus (V_0\cup V_1)$ s.t. $|S_0| + |S_1|+|S_2|=k$, we show that for any $u\in S=S_0\cup S_1\cup S_2$, there exists $C_u\subseteq  V\setminus S$ s.t. $Odd(C_u)\cap S = \{u\}$ and $C_u\subseteq Even(C_u)$. 
For any $u\in S$, there exists $i\in \{0,1\}$ s.t.  $u\in S_i\cup S_2$. 
Let $L_i: 2^{S_i\cup S_2} \to 2^{V_{1-i}\setminus S_{1-i}}$ be the function which maps $D\subseteq S_i\cup S_2$ to $L_i(D) = Odd_G(D)\cap (V_{1-i}\setminus S_{1-i})$.

$L_i$ is linear according to the symmetric difference. $L_i$ is injective: for any $D\subseteq S_i\cup S_2$, $Odd(D) \cap (V_{1-i}\setminus S_{1-i})=\emptyset$ implies $Odd(D)\cap V_{1-i} \subseteq S_{1-i}$, thus $|Odd(D)\cap V_{1-i} |\le |S_{1-i}|$.
notice that $|D|\le |S_i|+|S_2|$, so $|Odd(D)\cap V_{1-i}|\le |S_{1-i}| \le |S_0|+|S_1|+|S_2|-|D| = k-|D|$, so 
$D=\emptyset$. 

The matrix representing $L_i$ is nothing but the submatrix $\Gamma_{[S_i\cup S_2, V_{1-i}\setminus S_{1-i}]}$ of the adjacency matrix $\Gamma$ of $G$. So its transpose 
$\Gamma_{[V_{1-i}\setminus S_{1-i},S_i\cup S_2]}$ is surjective which means that the corresponding linear map $L_i^T:2^{V_{1-i}\setminus S_{1-i}} \to 2^{S_i\cup S_2}= C\mapsto Odd_G(C)\cap (V_{1-i}\setminus S_{1-i})$ is surjective, so $\exists C_u\subseteq V_{1-i}\setminus S_{1-i}$ s.t. $Odd_G(C_u)\cap (S_i\cup S_2)=\{u\}$, which implies, since $V_{1-i}$ is an independent set, that $Odd_G(C_u)\cap S = \{u\}$ and $C_u\subseteq Even (C_u)$. 
\qed \end{proof}


\begin{theorem}
For any even $n>n_0$,  there exists a non-bipartite $\lfloor 0.110n \rfloor$-odd dominated graph of order $n$. 
\end{theorem}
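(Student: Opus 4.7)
The plan is to apply Lemma~\ref{lem1} to a random graph with two large independent sets. Partition the $n$ vertices as $V=V_0\sqcup V_1\sqcup V_2$ with $|V_0|=|V_1|=m:=(n-r)/2$ for some small even constant $r\ge 4$, and $|V_2|=r$. Fix a triangle on three vertices of $V_2$ to guarantee non-bipartiteness. No edges are placed inside $V_0$ or $V_1$, so both remain independent; every remaining potential edge (inside $V_2$ apart from the fixed triangle, between $V_0$ and $V_1$, or between $V_i$ and $V_2$) is included independently with probability $1/2$.

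To verify the hypothesis of Lemma~\ref{lem1}, fix $i\in\{0,1\}$ and a non-empty $D\subseteq V\setminus V_i$. For each $v\in V_i$, the edges from $v$ to $D$ are $|D|$ independent $\mathrm{Bernoulli}(1/2)$ coins, so $\Pr[v\in\odd(D)]=1/2$, and the events for different $v\in V_i$ are independent since they use disjoint random edges. Hence $|\odd(D)\cap V_i|$ is distributed as $\mathrm{Bin}(m,1/2)$, regardless of the deterministic triangle inside $V_2$. The hypothesis $|\odd(D)\cap V_i|>k-|D|$ is then a binomial-tail event whose failure probability is, for $|D|=d$ with $k-d\le m/2$, at most $2^{-m(1-H((k-d)/m))}$ by Chernoff, where $H$ denotes the binary entropy.

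I would then union-bound over all pairs $(i,D)$. Using $\binom{n-m}{d}\le 2^{(n-m)H(d/(n-m))}$ and writing $\alpha=k/n\le 0.110$, $\delta=d/n$, the $\log_2$ of the expected number of violations with $|D|=d$ is at most $\tfrac{n}{2}\bigl[H(2\delta)+H(2(\alpha-\delta))-1\bigr]+o(n)$. By symmetry and concavity of $H$, this is maximized at $\delta=\alpha/2$, giving $\tfrac{n}{2}[2H(\alpha)-1]$. The arithmetic miracle driving the theorem is that $H(0.110)<\tfrac12$: indeed $H(0.110)\approx 0.49992$, so $2H(\alpha)-1<0$ and the maximum term decays as $2^{-cn}$ for a small $c>0$. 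Summing over the $O(n)$ values of $d$ and the two choices of $i$ keeps the total failure probability below $1$ once $n>n_0$ for a sufficiently large $n_0$, so with positive probability the random graph satisfies Lemma~\ref{lem1} and is $\lfloor 0.110 n\rfloor$-odd dominated. Non-bipartiteness is ensured by the hardwired triangle.

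The main obstacle is the razor-thin margin $1-2H(0.110)\approx 1.5\times 10^{-4}$, which forces both the Chernoff estimates and the entropy computation to be sharp and makes $n_0$ necessarily large. The boundary regimes $d\to 0$ and $d\to k$ yield much better exponents (at $d=k$ the failure probability is $2^{-m}$, easily beating $\binom{n-m}{k}\le 2^{(n/2)H(2\alpha)}$), so the analysis localizes at $d=k/2$ and the single inequality $2H(0.110)<1$ is really the only hurdle; pushing $\alpha$ above the threshold $H^{-1}(1/2)\approx 0.11003$ would require a less symmetric construction or a refined moment argument.
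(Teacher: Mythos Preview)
Your argument is correct and essentially identical to the paper's: both invoke Lemma~\ref{lem1} on a random graph with two independent blocks $V_0,V_1$ of size $\approx n/2$ and a tiny remainder $V_2$, bound the failure probability for each $D$ by the binomial tail $2^{-m(1-H((k-d)/m))}$, union-bound with $\binom{n-m}{d}\le 2^{(n-m)H(d/(n-m))}$, optimise over $d$ by concavity of $H$, and conclude from the single numerical fact $H(0.110)<\tfrac12$.

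The only substantive difference is how non-bipartiteness is obtained. You hardwire a triangle inside $V_2$ (hence require $r\ge 4$ even), which costs nothing in the analysis since the fixed edges are disjoint from the random edges incident to $V_0\cup V_1$. The paper instead keeps all three blocks independent, takes $r=1$, and adds a further bad event ``$G$ is bipartite'' with probability at most $(7/8)^{pr}$. Your route is cleaner and avoids the mild inconsistency in the paper (it states the theorem for even $n$ but then sets $r=1$, which would make $n$ odd); the paper's route lets $r$ be smaller but needs an extra estimate. Either way the entropy calculation and the threshold $\alpha<H^{-1}(1/2)\approx 0.11003$ are exactly the same.
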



\begin{proof}
Given $n$,  $r\le n$ s.t. $r=n \mod 2$, and $k\ge 0$.
Let $p=(n-r)/2$, and  
let $G=(V_0\cup V_1\cup V_2,E)$ s.t. $|V_0|=|V_1| = p$, $|V_2|=r$ be a random graph on $n$ vertices s.t. for any $u\in V_i$, $v\in V_j$ there is an edge between $u$ and $v$ with probability $0$ if $i=j$ and with probability $1/2$ otherwise.

For any $i\in \{0,1\}$, and any non empty $D\subseteq V\setminus V_i$ s.t. $|D|\le k$, 
let $A^{(i)}_D$ be the bad event $|Odd_G(D)\cap V_{i}|\le k-|D|$. Since each vertex of $V_i$ is in $Odd_G(D)$ with probability $1/2$, 
$Pr(A^{(i)}_D)= \sum_{j=0}^{k-|D|}{p\choose j}2^{-p}\le 2^{p[H(\frac {k-|D|}{p})-1]}$.

Another bad event is that $G$ is bipartite which occurs with probability less than $(\frac 7 8)^{pr}$. Indeed,  the probability that given $u\in V_0,v\in V_1, w\in V_2$, $(u,v,w)$ do not form a triangle is $\frac 78$, so given a bijection $f: V_0\to V_1$, the probability that  $\forall u\in V_0, \forall w\in V_2$,   $(u,f(u),w)$ do not form a triangle is $(\frac 7 8)^{pr}$.




Let $X$ be the number of bad events. 
\begin{eqnarray*}
E[X] = 2\sum_{d=1}^{k}{p+r\choose d}  \sum_{j=0}^{k-d}{p\choose j}2^{-p} +(\frac 7 8)^{pr} \\
\le 2\sum_{d=1}^{k}2^{(p+r)H(\frac{d}{p+r})+pH(\frac{k-d}{p})-p}+(\frac 7 8)^{pr} \\
\le 2\sum_{d=1}^{k}2^{pH(\frac{d}{p+r})+pH(\frac{k-d}{p})-p + r}+(\frac 7 8)^{pr} \\
\end{eqnarray*}
The function $d\mapsto pH(\frac d{p+r})+pH(\frac{k-d}p)-p+r$ is maximal for $d=\frac {k(p+r)} {2p+r}$. Thus,
  
$E[X] 
\le 2k2^{2pH(\frac{k}{2p+r})-p + r}+(\frac 7 8)^{pr} 
$.

By taking $r=1$, and $k=0.11n=0.11(2p+1)$, $E[X]<1$ when $p$ large enough, thus $G$  has no bad event with a non zero probability. 
\qed \end{proof}

\begin{corollary}
There exist  contextuality scenarios with linear multipartiteness width: for any even $n>n_0$,  there exist graph games on $n$ players producing  contextuality scenarios  of multipartiteness width at least $\lfloor 0.11n \rfloor$.
\end{corollary}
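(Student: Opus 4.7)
The plan is to upgrade the combinatorial theorem just proven into a statement about multipartiteness width by combining three ingredients already available in the paper: (a) from the theorem, for every sufficiently large even $n$ there is a non-bipartite graph $G$ on $n$ vertices that is $\lfloor 0.11n\rfloor$-odd dominated; (b) Lemma \ref{lem:simwin}, which shows that any strategy for $\textbf{MCG}(G)$ that never loses can be turned, by post-composing with $\textbf{CStrat}$, into an exact simulation of the quantum distribution $\textbf{QStrat}$ arising from measuring $\ket G$; and (c) the result of \cite{AM} that a $k$-odd dominated graph yields a quantum distribution which cannot be classically simulated with no-signaling resources involving strictly fewer than $k$ players. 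Non-bipartiteness of $G$ also ensures, via \cite{AM}, that $\textbf{MCG}(G)$ is genuinely pseudo-telepathic, so that the associated scenario is non-trivial.

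From here I would argue by contradiction. Set $k=\lfloor 0.11n\rfloor$ and suppose the contextuality scenario $H_G\cup H_{\textup{Nsig}_V}$ admits a $(k-1)$-multipartite interpretation $p$. Because $p$ is an interpretation of $H_G$, its support lies entirely outside the losing set $\mathcal L_G$, so sampling from $p$ yields a never-losing strategy $Q$ for $\textbf{MCG}(G)$ that uses only at most $(k-1)$-partite no-signaling boxes. Plugging $Q$ into Lemma \ref{lem:simwin} then produces a strategy $Q'$ that reproduces $\textbf{QStrat}$ exactly. The only extra resource consumed by the transformation is the pre-shared random bits of $\textbf{CStrat}$, which are purely classical and hence trivially no-signaling of arity one, so $Q'$ remains a $(k-1)$-multipartite simulation. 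Consequently $\textbf{QStrat}$ itself would be simulable with no-signaling resources on strictly fewer than $k$ players.

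This directly contradicts ingredient (c): the $\lfloor 0.11n\rfloor$-odd domination of $G$ precisely rules out such a simulation of the quantum distribution obtained from $\ket G$. Hence no $(k-1)$-multipartite interpretation exists, and the multipartiteness width of the scenario is at least $\lfloor 0.11n\rfloor$, proving the corollary. The main obstacle I anticipate lies in the bookkeeping of the contradiction step: one must verify that the map $Q\mapsto Q'$ provided by Lemma \ref{lem:simwin} genuinely preserves the multipartiteness parameter --- i.e.\ that the shared randomness used by $\textbf{CStrat}$ contributes no higher-arity no-signaling correlations --- and that the non-simulability statement imported from \cite{AM} is phrased in terms of the same notion of $k$-multipartiteness introduced in Definition 4.
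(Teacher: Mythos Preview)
Your proposal is correct and follows essentially the same route as the paper: invoke the theorem for a non-bipartite $\lfloor 0.11n\rfloor$-o.d.\ graph, use the result of \cite{AM} that $k$-odd domination forbids simulation of \textbf{QStrat} by no-signaling boxes on fewer than $k$ parties, and bridge from ``winning the game'' to ``simulating the distribution'' via Lemma~\ref{lem:simwin}. The paper's proof is terser and does not unfold the contradiction explicitly, but the logic is identical; the bookkeeping concern you flag about \textbf{CStrat} is handled by noting that shared classical randomness is a convex mixture of local deterministic maps and therefore does not raise the multipartiteness parameter.
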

\begin{proof}
Using the result from \cite{AM}, for any non bipartite graph of order $n$ being $0.11n$-o.d ensures  that the probability distribution obtained by using the quantum strategy cannot be simulated using non local boxes  involving at most 0.11$n$ parties. Thus  lemma \ref{lem:simwin} allows to conclude that the associated pseudo-telepathy game cannot be won classically. Therefore there is no interpretation that is $k$-multipartite with $k<0.11n$  which means that the  contextuality scenario has linear width.
\qed \end{proof}

\section{Conclusion}
We have shown that there exist graphs with linear multipartiteness width, however the proof is non constructive and  the  best known bound for explicit families  is logarithmic.
 A natural future direction of research would be to 
find explicit families with linear multipartiteness width or to improve the bounds proven for the Paley graph states.  An other important question is to consider lower bounds for the  scenarios associated with the graph games.
 A promising area of investigation for multipartite scenarios is: what happens if we limit the width of shared randomness?  Indeed,  for the proof of how winning  the game allows to simulate the quantum probability distributions, one needs only shared random variables that are correlated in local neighborhoods in the graph.
 One can also consider the link with building entanglement witnesses for graph states, generalizing the construction of \cite{HD06}.
It would be also very interesting to link the  multipartiteness width with the structures of the groups of  the associated binary linear system defining the two-player bipartite non-local games \cite{S}.
 Finally, 
one can expect that the multipartiteness width of the Paley graph states might have cryptographic applications to ensure security against cheating for some protocols for example.


%

\section*{Acknowledgements.}
We would like to thanks an anonymous reviewer for noticing a mistake in an earlier version and helpful comments.

\end{document}